\numberwithin{equation}{section}
\numberwithin{figure}{section}
\theoremstyle{plain}
\newtheorem{thm}{\protect\theoremname}[section]
\theoremstyle{definition}
\newtheorem{defn}[thm]{\protect\definitionname}
\theoremstyle{remark}
\theoremstyle{plain}
\newtheorem{lem}[thm]{\protect\lemmaname}
\theoremstyle{remark}
\theoremstyle{plain}
\newtheorem{corollary}[thm]{\protect\corollaryname}
\theoremstyle{plain}
\theoremstyle{remark}
\providecommand{\claimname}{Claim}
\providecommand{\definitionname}{Definition}
\providecommand{\lemmaname}{Lemma}
\providecommand{\remarkname}{Remark}
\providecommand{\theoremname}{Theorem}
\providecommand{\corollaryname}{Corollary}
\providecommand{\propositionname}{Proposition}
\providecommand{\examplename}{Example}
\newcommand{\R}{\mathbb{R}}
\newcommand{\Z}{{\mathbb{Z}}}
\newcommand{\E}{\mathbb{E}}
\DeclareMathOperator*{\argmin}{argmin}
\DeclareMathOperator*{\argmax}{argmax}
\DeclareMathOperator*{\tr}{tr}
\DeclareMathOperator*{\Cov}{Cov}
\newcommand{\N}{{\mathcal{N}}}
\newcommand{\MSE}{\mathrm{MSE}}
\newcommand{\kk}{{\mathbf{k}}}
\newcommand{\OO}{{\mathcal{O}}}
\newcommand{\LN}{{\mathcal{L}}_N}
\newcommand{\A}{{\mathcal{A}}}
\let\P\PSymbol
\def\P{\mathbb{P}}
\newcommand{\error}{Z}
\newcounter{nienumi}
\def\niitem{\par\smallskip \textit{\refstepcounter{nienumi}{\arabic{nienumi}})\hspace{.5em}}}
\newenvironment{nienumerate}
{\parindent0pt\par}
{\setcounter{nienumi}{0}\par\smallskip}
\begin{document} 

\title{Estimation in the group action channel}

\author[1,2]{Emmanuel Abbe}
\author[1]{Jo\~ao M. Pereira}
\author[1,3]{Amit Singer}

\thanks{
	EA was partly supported by the NSF CAREER Award CCF--1552131, ARO grant W911NF--16--1--0051 and NSF Center for the Science of Information CCF--0939370.  JP and AS were partially supported by Award Number R01GM090200 from the NIGMS, the Simons Foundation Investigator Award and Simons Collaborations on Algorithms and Geometry, the Moore Foundation Data-Driven Discovery Investigator Award, and AFOSR FA9550-17-1-0291.}
 
 \affil[1]{The Program in Applied and Computational Mathematics,
    Princeton University, Princeton, NJ, USA}
\affil[2]{Electrical Engineering Department, Princeton University, Princeton, NJ, USA}
\affil[3]{Department of Mathematics, Princeton University, Princeton, NJ, USA\vspace{-.2cm}
}

\maketitle

\begin{abstract}
%THIS PAPER IS ELIGIBLE FOR THE STUDENT PAPER AWARD.
We analyze the problem of estimating a signal from multiple measurements on a \emph{group action channel} that linearly transforms a signal by a random group action followed by a fixed projection and additive Gaussian noise. This channel is motivated by applications such as multi-reference alignment and cryo-electron microscopy. We focus on the large noise regime prevalent in these applications. We give a lower bound on the mean square error (MSE) of any asymptotically unbiased estimator of the signal's orbit in terms of the signal's moment tensors, which implies that the MSE is bounded away from 0 when $N/\sigma^{2d}$ is bounded from above, where $N$ is the number of observations, $\sigma$ is the noise standard deviation, and $d$ is the so-called \emph{moment order cutoff}. In contrast, the maximum likelihood estimator is shown to be consistent if $N /\sigma^{2d}$ diverges.
\end{abstract}

\begin{IEEEkeywords}
Multi-reference alignment, cryo-EM, Chapman-Robbins bound.
\end{IEEEkeywords}

%~~~~~~~~~~~~~~~~~~~~~~~~~~~~~~~~~~~~~~~~~~~~~~~~~~~~~~~~~~~~~~~~~~~~~~~~~~~~~~~~~~~~~~~~~

\section{Introduction}
In this paper, we consider the problem of estimating $x\in \R^L$ with $N$ measurements from the \emph{group action channel}, defined as
\begin{equation}\label{eq:mra}
Y_j  = P G_j x + \sigma \error_j\in \R^K,\quad j= 1,\dots,N,
\end{equation}
where the $\error_j$ are i.i.d and drawn from $\N(0,I_K)$, i.e. $\error_j\in \R^K$ and its entries are i.i.d standard Gaussian variables; $P\in \R^{K\times L}$ is a projection matrix which is known; $G_j$ are i.i.d. matrices drawn from a distribution $\theta$ on a compact subgroup $\Theta$ of $O(L)$, i.e. the space of orthogonal matrices in $\R^L$. The distribution $\theta$ is not known, however the main goal is to estimate the signal $x\in \R^L$.

The goal of this paper is to understand the sample complexity of \eqref{eq:mra}, i.e. the relation between the number of measurements and the noise standard deviation such that an estimator $\hat X$, of $x$, converges in probability to the true value with $N$ diverging, up to a group action. Allowing for a group action is intrinsic to the problem: if we apply an element $g$ of $\Theta$ to $x$, and its inverse $g^{-1}$ to the right of $\theta$, we will produce exactly the same samples, thus there is no estimator $\hat X$ that is able to distinguish the observations that originate from $x$ and the ones from $gx$.

The model \eqref{eq:mra} is a generalization of multi-reference alignment (MRA), which arises in a variety of engineering and scientific applications, among them structural biology~\cite{park2011stochastic,park2014assembly,scheres2005maximum}, radar~\cite{zwart2003fast,gil2005using}, robotics~\cite{rosen2016certifiably} and image processing~\cite{dryden1998statistical,foroosh2002extension,robinson2009optimal}.
The one-dimensional MRA problem, where $\Theta$ is the group generated by the matrix $R$ that cyclically shifts the elements of the signal, i.e. it maps $(x_1,\dots,x_L)\mapsto (x_L,x_1\dots,x_{L-1})$, has been recently a topic of active research. In \cite{bandeira2017optimal}, it was shown that the sample complexity is $\omega(\sigma^{6})$ when $\theta$ is the uniform distribution and the projection matrix is the identity. In \cite{perry2017sample} it was presented a provable polynomial time estimator that achieves the sample complexity, while in  \cite{bendory2017bispectrum} it was presented a non-convex optimization framework that is more efficient in practice. Note that, when the projection matrix is the identity, we can always enforce a uniform distribution on $\Theta$ by applying a random group action, i.i.d. and drawn from the uniform distribution, to the observations. In \cite{abbe2017multireference}, it was shown that $\omega(\sigma^{6})$ is also the sample complexity if $\theta$ is unknown beforehand but is uniform or periodic, this is, $\theta=R^{\ell}\theta$ for some $1\le \ell\le L-1$. However, if $\theta$ is aperiodic, the sample complexity is $\omega(\sigma^{4})$. It is also presented an efficient estimator that uses the first and second moments of the signal over the group, which can be estimated with order of $\sigma^{2}$ and $\sigma^{4}$ observations, respectively, thus achieving the sample complexity. The main result in this paper is a generalization of the information lower bound presented in \cite{abbe2017multireference}, however the proof techniques remain the same. 

We can also use \eqref{eq:mra} to model the problem of single particle reconstruction in cryo-electron microscopy (cryo-EM), in which a
three-dimensional volume is recovered from two-dimensional
noisy projections taken at unknown viewing directions \cite{bartesaghi20152,sirohi20163}. Here $x$ is a linear combination of products of spherical harmonics and radial basis functions, $\Theta\equiv SO(3)$, and its elements act on $x$ by rotating the basis functions. Finally, $P$ is a tomographic projection onto the $xy$ plane. The paper \cite{bandeira2017estimation} considers the problem \eqref{eq:mra} with $\theta$ being known and uniform. It obtains the same result for the sample complexity as this paper, and together with results from computational algebra and invariant theory verifies that in many cases the sample complexity for the considered cryo-EM model is $\omega(\sigma^6)$, and at least $\omega(\sigma^6)$ more generally. They also consider the problem of heterogeneity in cryo-EM.

\section{The Main Result}

Since we can only determine $x$ up to a group action, we define  the best alignment of $\widehat X$ with $x$ by 
\begin{equation}\label{eq:Rxdef}
\phi_{x}(\widehat{X}) = \argmin_{z\in \{ g\widehat{X}\}_{g\in\Theta}} \|z-x\| .
\end{equation}
and the mean square error (MSE) as
\begin{align}
\nonumber \MSE&:=\E\left[\min_{g\in\Theta} \|g \widehat X-x\|^2\right],\\
\label{eq:MSE+Rx} &=\E\left[\|\phi_{x}(\widehat X)-x\|^2\right].
\end{align}
The expectation is taken over $\widehat X$, which is a function of the observations with distribution determined by \eqref{eq:mra}. Since we are interested in estimators that converge to an orbit of $x$ in probability as $N$ diverges, we only consider estimators which are asymptotically unbiased, i.e., $\E[\phi_{x}(\widehat X)]\rightarrow x$ as $N \to \infty$. However the results presented in this paper can be adapted to biased estimators (see Theorem~\ref{thm:ChapmanRobbins}).

Let us introduce some notation regarding tensors.  For a vector $x\in \R^L$, we denote by $x^{\otimes n}$ the $L^{\otimes n}$ dimensional tensor where the entry indexed by $\kk=(k_1,\dots,k_n)\in \Z^n_L$ is given by $\prod_{j=1}^n x[k_j]$. The space of $n$-dimensional tensors forms a vector space, with sum and multiplication defined entry-wise. This vector-space has inner product and norm defined by $\left<A,B\right>=\sum_{\kk\in\Z_L^n}A[\kk]B[\kk]$ and $\|A\|^2=\left<A,A\right>$, respectively.

\begin{defn}\label{defn:autocorr}
	The $n$-th order moment of $x$ over $\theta$, is the tensor of order $n$ and dimension $K^{n}$, defined by
	\begin{equation}\label{eq:autocorrelation}
	M^n_{x,\theta}:=\E\left[(PG x)^{\otimes n}\right],
	\end{equation}
	where $G\sim \theta$.	
	
\end{defn}

In this paper, we provide lower bounds for the MSE in terms of the noise standard deviation and the number of observations. We show that the MSE is bounded below by order of $N/\sigma^{2\bar d}$, where $\bar d$ is the \emph{moment order cutoff}, defined as the smallest such that the moment tensors up to order $\bar d$ define $x$ unequivocally.
We also show that if $N>>\sigma^{2\bar d}$, then the marginalized maximum likelihood estimator (MLE) converges in probability to the true signal (up to a group action). We now present the main result of the paper.

\begin{thm}\label{thm:corsigma}
Consider the estimation problem given by equation \eqref{eq:mra}. For any signal $\tilde x\in \R^L$ such $\phi_x(\tilde x)\neq x$ and for any group distribution $\tilde \theta$, let $K^n_{\tilde x,\tilde \theta}=\frac1{n!}\|M^n_{\tilde x,\tilde \theta}- M^n_{x,\theta}\|^2$, 
\linebreak $d_{\tilde x,\tilde \theta}=\inf\left\{n:K^n_{\tilde x,\tilde \theta}>0\right\}$ and
define the \textbf{moment order cutoff} as $\bar d=\max d_{\tilde x,\tilde \theta}$.
%\hspace{.5cm} and \smallskip \linebreak $\B=\{(\tilde x,\tilde \rho):d_{\tilde x,\tilde \rho}=\bar d\}$.
Finally let
$$\lambda^{m}_N=N/\sigma^{2m}.$$
We have
\begin{equation}\label{eq:momentssanov}
\MSE\ge \sup_{\tilde x,\tilde \theta: d_{\tilde x,\tilde \theta}=\bar d}\left\{\frac{\|\phi_{x}(\tilde x)-x\|^2}
{\exp\left(\lambda^{\bar d}_N K^{\bar d}_{\tilde x,\tilde \theta}\right)-1+\OO\left(\lambda^{\bar d}_N \sigma^{-1}\right)}\right\},
\end{equation}
thus the MSE is bounded away from zero if $\lambda^{\bar d}_N$ is bounded from above. Moreover, if $\lim_{N\rightarrow \infty}\lambda^{\bar d}_N=\infty$, then the MLE converges in probability to $gx$, for some element $g\in\Theta$.
\end{thm}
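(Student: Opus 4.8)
The plan is to prove the theorem in two halves, corresponding to the two claims.

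\medskip

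\noindent\textbf{Lower bound via a Chapman–Robbins type argument.} For the first claim, I would fix an arbitrary competitor pair $(\tilde x,\tilde\theta)$ with $d_{\tilde x,\tilde\theta}=\bar d$ and compare the true parameter $(x,\theta)$ against it. The key object is the $\chi^2$-divergence (or, equivalently, a second-moment quantity) between the $N$-fold product distributions $\mathbb{P}_{x,\theta}^{\otimes N}$ and $\mathbb{P}_{\tilde x,\tilde\theta}^{\otimes N}$ on the observations $Y_1,\dots,Y_N$. The Chapman–Robbins bound (referenced as Theorem~\ref{thm:ChapmanRobbins} in the text) will give $\MSE \ge \|\phi_x(\tilde x)-x\|^2 / \chi^2(\mathbb{P}_{\tilde x,\tilde\theta}^{\otimes N}\,\|\,\mathbb{P}_{x,\theta}^{\otimes N})$, using asymptotic unbiasedness to control the bias term. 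By tensorization, $1+\chi^2$ of the product equals $(1+\chi^2\text{ of a single observation})^N$, so the whole problem reduces to estimating the single-letter $\chi^2$-divergence between two group-action channels with the \emph{same} projection $P$ and noise level $\sigma$ but different signal/distribution. The main technical step is then the small-$1/\sigma$ expansion of this single-letter divergence: writing the Gaussian likelihood ratio and expanding in Hermite polynomials, the order-$n$ term is governed precisely by $\|M^n_{\tilde x,\tilde\theta}-M^n_{x,\theta}\|^2/(n!\,\sigma^{2n})$, i.e.\ by $K^n_{\tilde x,\tilde\theta}\sigma^{-2n}$. Since $K^n$ vanishes for $n<\bar d$ by definition of $d_{\tilde x,\tilde\theta}=\bar d$, the leading surviving term is $K^{\bar d}_{\tilde x,\tilde\theta}\sigma^{-2\bar d}$, so $\chi^2$ of a single observation is $K^{\bar d}_{\tilde x,\tilde\theta}\sigma^{-2\bar d}+\OO(\sigma^{-2\bar d-1})$, and raising to the $N$-th power and using $\lambda^{\bar d}_N=N/\sigma^{2\bar d}$ yields $\exp(\lambda^{\bar d}_N K^{\bar d}_{\tilde x,\tilde\theta})-1+\OO(\lambda^{\bar d}_N\sigma^{-1})$ in the denominator. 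Taking the supremum over all such $(\tilde x,\tilde\theta)$ gives \eqref{eq:momentssanov}; boundedness away from zero when $\lambda^{\bar d}_N$ is bounded is then immediate, provided the supremum is attained at a pair with $\|\phi_x(\tilde x)-x\|^2>0$, which holds by the definition of $\bar d$ as a maximum over pairs with $\phi_x(\tilde x)\neq x$.

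\medskip

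\noindent\textbf{Consistency of the MLE when $\lambda^{\bar d}_N\to\infty$.} For the second claim I would argue that the (marginalized) log-likelihood, normalized by $N$, concentrates around its expectation uniformly over the parameter space, and that the expected log-likelihood (equivalently, minus the KL divergence to the truth) is uniquely maximized, up to the unavoidable group action, at the true $(x,\theta)$ — at least at the resolution dictated by moments up to order $\bar d$. Concretely: by the definition of $\bar d$, two parameters producing the same moment tensors $M^n$ for all $n\le\bar d$ must lie in the same orbit; equivalently, $K^{\bar d}$ separates distinct orbits at leading order in $1/\sigma$. Since the per-sample KL divergence behaves like a positive multiple of $\min_{n\le\bar d}$ of the squared moment differences scaled by $\sigma^{-2n}$, it is of order at least $\sigma^{-2\bar d}$ times a constant depending on how far $(\tilde x,\tilde\theta)$ is from the orbit of $(x,\theta)$; multiplying by $N$ gives a separation of order $\lambda^{\bar d}_N\to\infty$ between the log-likelihood at the truth and at any parameter outside a shrinking neighborhood of the orbit. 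A uniform law of large numbers (using compactness of $\Theta$ and, if necessary, restricting $x$ to a bounded set, together with standard bracketing or Lipschitz bounds on the Gaussian log-likelihood) then forces the MLE into that neighborhood with probability tending to one, and letting the neighborhood shrink gives convergence in probability to $gx$ for some $g\in\Theta$.

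\medskip

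\noindent\textbf{Main obstacle.} The delicate part is the uniform control in both halves: in the lower bound, making the $\OO(\lambda^{\bar d}_N\sigma^{-1})$ error term genuinely uniform over the supremum set $\{(\tilde x,\tilde\theta):d_{\tilde x,\tilde\theta}=\bar d\}$ (so that the bound is not vacuous), and in the consistency proof, establishing the uniform law of large numbers for the marginalized log-likelihood over a noncompact signal space and an integral over $\Theta$. The underlying combinatorics — that the Hermite expansion of the Gaussian $\chi^2$ and KL divergences is organized exactly by the moment tensors $M^n_{x,\theta}$ — is essentially the same computation that appeared in \cite{abbe2017multireference} and \cite{bandeira2017estimation}, so I expect to reuse those estimates rather than redo them from scratch.
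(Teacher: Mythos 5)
Your proposal follows essentially the same route as the paper: the orbit-adapted Chapman--Robbins bound, tensorization of the $\chi^2$-divergence, a small-$1/\sigma$ expansion of the single-observation divergence organized by the moment tensors (your Hermite-polynomial expansion is the same computation as the paper's Taylor expansion of the likelihood in $\gamma=1/\sigma$, via the generating function $\exp(\gamma\tilde\gamma\langle P\tilde G\tilde x,PGx\rangle)$), and a law-of-large-numbers argument comparing the normalized log-likelihood to the KL divergence, whose leading term is half the $\chi^2$ term. The structure, the key lemma, and the conclusions all match; if anything, your insistence on a uniform law of large numbers for the MLE step is more careful than the paper's pointwise treatment.
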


\subsection{Taking the limit $(\tilde x,\tilde \rho)\rightarrow (x,\rho)$}
Theorem \ref{thm:corsigma} is an application of a modified Chapman-Robbins bound, presented later in Theorem \ref{thm:ChapmanRobbins}. On the other hand the classical Cram\'er-Rao bound~\cite{CramerRao}, which gives a lower bound on the variance of an estimator $\hat S$ of a parameter $s\in \R$, can be obtained from the Chapman-Robbins bound by taking the limit $\tilde s\rightarrow s$. We present an analog version of Theorem \ref{thm:corsigma} obtained by taking a similar limit.

\begin{corollary}
	Under the conditions of Theorem \ref{thm:corsigma}, 
	let\linebreak $x_h:=(1-h)x+h\tilde x$, $\theta_h:=(1-h)\theta+ h\tilde \theta$,
	$$Q^n_{\tilde x,\tilde \theta}=\lim\limits_{h\rightarrow 0} \frac1{n!h^2}\|M^n_{x_h,\theta_h}- M^n_{x,\theta}\|^2,$$ 
	\smallskip $q_{\tilde x,\tilde \theta}=\inf\left\{n:Q^n_{\tilde x,\tilde \theta}>0\right\}$ and $\bar q=\max q_{\tilde x,\tilde \theta}$. 
	Then
	\begin{equation}\label{eq:momentscramerrao}
	\MSE\ge \sup_{\tilde x,\tilde \theta: q_{\tilde x,\tilde \theta}=\bar q}\left\{\frac{\|\phi_{x}(\tilde x)-x\|^2}{\lambda^{\bar q}_N Q^{\bar q}_{\tilde x,\tilde \theta}
		+\OO\left(\lambda^{\bar q}_N \sigma^{-1}\right)}\right\}.
	\end{equation}
\end{corollary}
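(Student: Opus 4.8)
The plan is to derive \eqref{eq:momentscramerrao} from Theorem~\ref{thm:corsigma} by letting the alternative hypothesis tend to the truth, $(\tilde x,\tilde\theta)\to(x,\theta)$, along the segment $(x_h,\theta_h)$ -- exactly as the classical Cram\'er--Rao bound is obtained from the Chapman--Robbins bound. Concretely, I would re-run the proof of Theorem~\ref{thm:corsigma}, which through the modified Chapman--Robbins bound (Theorem~\ref{thm:ChapmanRobbins}) lower-bounds $\MSE$ by the squared displacement of the estimand divided by the $\chi^2$-divergence between the laws of the $N$ observations under the two hypotheses, now with the alternative $(x_h,\theta_h)$ and followed by $h\to 0$.

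For the denominator, the first step is to expand the moment tensors in $h$: since $(PGx_h)^{\otimes n}$ is a polynomial in $h$ for each $G$, the tensor $h\mapsto M^n_{x_h,\theta_h}$ is a polynomial that equals $M^n_{x,\theta}$ at $h=0$, so $\frac{1}{n!}\|M^n_{x_h,\theta_h}-M^n_{x,\theta}\|^2=h^2 Q^n_{\tilde x,\tilde\theta}+o(h^2)$. Hence as $h\to0$ the lowest moment order at which the hypotheses differ at scale $h^2$ is $q_{\tilde x,\tilde\theta}$, and for a pair attaining $q_{\tilde x,\tilde\theta}=\bar q$ the single-observation $\chi^2$-divergence equals $\sigma^{-2\bar q}h^2 Q^{\bar q}_{\tilde x,\tilde\theta}$ up to a factor $1+o_h(1)$ and the same relative $1+\OO(\sigma^{-1})$ correction present in Theorem~\ref{thm:corsigma}. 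Tensorizing over the $N$ observations and using $e^t-1=t(1+o(1))$ as $t\to0$, the denominator becomes $h^2\bigl(\lambda^{\bar q}_N Q^{\bar q}_{\tilde x,\tilde\theta}+\OO(\lambda^{\bar q}_N\sigma^{-1})\bigr)\bigl(1+o_h(1)\bigr)$.

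For the numerator, write $\phi_x(x_h)=g_h x_h$ with $g_h\in\Theta$ minimizing $\|g x_h-x\|$. Since $x_h\to x$ the minimizer $g_h$ tends to an element of the stabilizer of $x$ (the identity, after a harmless relabelling), and differentiating this $\argmin$ at $h=0$ removes the component of $\tilde x-x$ tangent to the orbit of $x$, so $\phi_x(x_h)-x=hw+o(h)$ with $w$ the projection of $\tilde x-x$ onto the normal space of the orbit at $x$. Using the freedom to replace $(\tilde x,\tilde\theta)$ by $(g\tilde x,g\tilde\theta g^{-1})$ one may assume $\tilde x$ is itself the closest point of its orbit to $x$; then first-order stationarity of the alignment forces $\tilde x-x$ to be normal to the orbit at $x$, so $w=\tilde x-x$ and $\|\phi_x(x_h)-x\|^2=h^2\|\phi_x(\tilde x)-x\|^2(1+o_h(1))$. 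Dividing numerator by denominator the factors $h^2$ cancel; letting $h\to0$ gives \eqref{eq:momentscramerrao} for this $(\tilde x,\tilde\theta)$, and taking the supremum over all pairs with $q_{\tilde x,\tilde\theta}=\bar q$ finishes the proof.

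The hard part is making the two limits coexist: one must verify that the $o_h(1)$ errors -- from the $\argmin$ over the compact group $\Theta$ on the numerator side, and from the large-$\sigma$ expansion of the Gaussian-mixture $\chi^2$-divergence together with its $N$-fold tensorization on the denominator side -- are uniform enough that dividing by $h^2$ and sending $h\to0$ preserves, rather than swamps, the $\OO(\lambda^{\bar q}_N\sigma^{-1})$ term. The polynomial dependence of $h\mapsto M^n_{x_h,\theta_h}$ is what keeps the divergence side under control, and the first-order stationarity of the best-alignment map is what pins down the numerator's limit instead of only bounding it.
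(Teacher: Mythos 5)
Your proposal is correct and follows exactly the route the paper intends: the paper defers the full proof to \cite[Appendix C]{abbe2017multireference} but explicitly describes the corollary as the Chapman--Robbins bound of Theorem~\ref{thm:ChapmanRobbins} applied to the alternative $(x_h,\theta_h)$ followed by the limit $h\to 0$, with Lemma~\ref{lem:falldominoes} supplying the $h^2$-order expansion of the $\chi^2$-divergence, which is precisely your argument. Your treatment of the numerator is also sound (for orthogonal actions, $\langle g^*\tilde x - x, Ax\rangle = -\langle A g^*\tilde x, x\rangle = 0$ at the optimal alignment by skew-symmetry, so the best-aligned $\tilde x - x$ is indeed normal to the orbit of $x$), and the uniformity issue you flag is the only detail left to the cited appendix.
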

We leave the proof of this corollary to \cite[Appendix C]{abbe2017multireference}. It is interesting to compare this bound with \eqref{eq:momentssanov} when $\lambda^{\bar d}_N$ or $\lambda^{\bar q}_N$ diverge. If $\bar q\ge \bar d$, then $\eqref{eq:momentscramerrao}$ will dominate $\eqref{eq:momentssanov}$, and the lower bound for the MSE will be inversely proportional to $\lambda^{\bar q}_N$, which is a behavior typical of estimation problems with continuous model paramaters. On the other hand, if $\bar d> \bar q$, then $\eqref{eq:momentssanov}$ dominates $\eqref{eq:momentscramerrao}$. the MSE will depend exponentially on $\lambda^{\bar d}_N$, which is a behaviour typical of discrete parameter estimation problems \cite{abbe2017sample}. One can show that $\bar d> \bar q$ only happens when the supremum in $\eqref{eq:momentssanov}$ is attained by some $x^*$ not in the orbit of $x$. The exponential decay in $\eqref{eq:momentssanov}$ is the same as the probability of error of the hypothesis testing which decides if the observations come from $x$ or $x^*$. 

We conjecture that the lower bounds presented in this paper can be achieved asymptotically by the MLE. In fact when the search space is discrete, the MLE achieves the least probability of error (assuming a uniform prior on the parameters), which behaves like \eqref{eq:momentssanov}. Also, when the search space is continuous, the MLE is asymptotically efficient, which means it achieves the Cramér-Rao lower bound. However this bound is obtained from the Chapman-Robbins lower bound (which we use in this paper) by taking a similar limit as in \eqref{eq:momentscramerrao}, and the bound also scales inversely proportional to the number of observations.

\subsection{Prior Knowledge}

The result presented can be adapted to improve the bound if we have prior knowledge about the signal and group distribution. If we know beforehand that $(x,\theta)\in\A$ (for instance, $x$ has a zero element or $\theta$ is the uniform distribution on $\Theta$), we can instead define 
$\bar d=\max_{(\tilde x,\tilde \theta)\in\A} d_{\tilde x,\tilde \theta}$ and restrict the supremum in \eqref{eq:momentssanov} to $(\tilde x,\tilde \theta)$ in $\A$.

\subsection{Examples}

\begin{nienumerate}
\niitem Let $x=(a,b,c)\in \R^3$; $\Theta$ be the group generated by the cyclic shift matrix $R$ that maps $(a,b,c)\mapsto (b,c,a)$; and $P$ projects $x$ into its first two elements, i.e $P(a,b,c)=(a,b)$. Furthermore, we know a-priori that one, and only one, of the elements of $x$ is $0$ (let's assume without loss of generality that $a=0$), the other two elements are distinct and $\theta$ is uniform, i.e. $\P(G=I)=\P(G=R)=\P(G=R^2)=\frac13$. We have
\begin{align*}
M^1_{x,\theta}&=\E\left[PG x\right],\\
&=\frac13(0,b)+\frac13(b,c)+\frac13(c,0),\\
&=\frac{b+c}3(1,1).
\end{align*}
and
\begin{align*}
M^2_{x,\theta}&=\E\left[(PG x)(PG x)^T\right],\\
&=\frac13\left[\begin{array}{cc}0&0\\0&b^2\end{array}\right]+
\frac13\left[\begin{array}{cc}b^2&bc\\bc&c^2\end{array}\right]+
\frac13\left[\begin{array}{cc}c^2&0\\0&0\end{array}\right],\\
&=\frac13\left[\begin{array}{cc}b^2+c^2&bc\\bc&b^2+c^2\end{array}\right].
\end{align*}

From these two moments, we can solve for $b$ and $c$, however all these equations are symmetric on $b$ and $c$, thus we can't identify which one of the values obtained is $b$ and which one is $c$. In other words, both candidate solutions are $x=(0,b,c)$ and $x^*=(0,c,b)$. However $M^3_{x,\theta}$ differs from $M^3_{x^*,\theta}$, if we look for the entry in $M^3_{x,\theta}$ indexed by $(1,1,2)$ we note that
\begin{align*}
M^2_{x,\theta}[1,1,2]&=\E\left[(PG x)^2_1(PG x)_2\right],\\
&=\frac13 0^2b+
\frac13b^2c+
\frac13c^20\\
&=\frac13 b^2c,
\end{align*}
and analogously $M^3_{x^*,\theta}[1,1,2]=c^2b$. From the 8 entries of $M^3_{x,\theta}$, $2$ are equal to $M^3_{x^*,\theta}$ and $6$ differ by $b^2c-c^2b$, in absolute value, so $\|M^3_{x^*,\theta}-M^3_{x,\theta}\|^2=6(b^2c-c^2b)^2$. This means $\bar d=3$, $\bar q=2$, thus if $\lambda_N^3$ the lower bound \eqref{eq:momentssanov} dominates \eqref{eq:momentscramerrao}, the supremum is attained at $x^*$ and
$$\MSE\ge \frac{\|\phi_{x}(x^*)-x\|^2}{\exp\left(\lambda_N^3(b^2c-c^2b)^2\right)-1+\OO\left(\lambda_N^3\sigma^{-1}\right)}.$$
Note that $\|\phi_{x}(x^*)-x\|^2=\min(b^2,c^2,2(b-c)^2)$.

\niitem Let $x=(a,b)\in \R^2$; $\Theta$ be the group generated by the cyclic shift matrix $R$ that maps $(a,b)\mapsto (b,a)$; and $P$ projects $x$ into its first element, i.e $P(a,b)=a$. Furthermore, we know a-priori that $\theta$ is uniform, i.e. $\P(G=I)=\P(G=R)=\frac12$. We have
\begin{align*}
M^1_{x,\theta}&=\E\left[PG x\right]=\frac12a+\frac12b=\frac{a+b}2
\end{align*}
and
\begin{align*}
M^2_{x,\theta}&=\E\left[(PG x)^2\right]=\frac12a^2+\frac12b^2=\frac{a^2+b^2}2.
\end{align*}
From these two moments we can determine $a$ and $b$ up to an action of the group. Now take $x_h=(a+h,b-h)$, so that $M^1_{x,\theta}=M^1_{x_h,\theta}$. We have
$$\lim_{h\rightarrow0}\frac1{h}(M^2_{x_h,\theta}-M^2_{x,\theta})=a-b.$$
Here $\bar q=\bar d=2$, thus if $\lambda_N^2$ diverges, \eqref{eq:momentscramerrao} dominates \eqref{eq:momentssanov}, and the lower bound is
$$\MSE\ge \frac{4}{\lambda_N^2(a-b)^2+
\OO\left(\lambda_N^2\sigma^{-1}\right)}.$$

\end{nienumerate}

\section{Proof Techniques} 

The outline of the proof is as follows. In Section~\ref{sec:ChapmanRobbins} we use an adaptation of the Chapman-Robbins lower bound~\cite{ChapRobb}, to derive a lower bound on the MSE in terms of the $\chi^2$ divergence, this is Theorem \ref{thm:ChapmanRobbins}. Then, in Section~\ref{sec:Fisherautocorr}, we express the $\chi^2$ divergence in terms of the Taylor expansion of the posterior probability density and the moment tensors, obtaining Lemma \ref{lem:falldominoes}. Finally in section \ref{sec:FinalDetails} we combine Theorem \ref{thm:ChapmanRobbins} and Lemma \ref{lem:falldominoes} to obtain \eqref{eq:momentssanov}, use Lemma \ref{lem:falldominoes} to obtain a similar Taylor expansion for the Kullback-Leibler (KL) divergence and use this to show that the MLE is consistent. 

Throughout the paper we denote the expectation by $\E$, use capital letter for random variables and lower case letter for instances of these random variables. Let $Y^{N}\in \R^{L\times N}$ be the collection of all measurements as columns in a matrix. Let us denote by $f^{N}_{x,\theta}$ the probability density of the posterior distribution of $Y^{N}$, 
\begin{equation}\label{eq:parindepence}
f^{N}_{x,\theta}(y^N)=\prod_{j=1}^N f_{x,\theta}(y_j),
\end{equation} 
and the expectation of a function $g$ of the measurements under the measure $f^{N}_{x,\theta}$ by
\begin{equation*}
\E_{x,\theta}\left[g\left(Y^{N}\right)\right]:=\int_{\R^{L\times N}} g\left(y^{N}\right) f^{N}_{x,\theta}\left(y^{N}\right)dy^{N}.
\end{equation*}
For ease of notation, we write $\E\left[g\left(Y^N\right)\right]$ when the signal and distribution are implicit. The bias-variance trade-off of the MSE is given by: \begin{equation}\label{eq:BiasVar}
\MSE=\tr(\Cov[\phi_x(\widehat X)])+
{\|\E[\phi_{x}(\widehat X)]-x\|^2},
\end{equation}
with
\begin{equation}\label{eq:Cov}
\Cov[\phi_x(\widehat X)]=\E\left[\phi_x(\widehat X)\phi_x(\widehat X)^T\right]-\E[\phi_{x}(\widehat X)]\E[\phi_{x}(\widehat X)]^T.
\end{equation}

Our last definition is of the $\chi^2$ divergence, which gives a measure of how "far" two probability distributions are.
\begin{defn}
	The $\chi^2$ divergence between two probability densities $f_A$ and $f_B$ is defined by
	\begin{equation*}
	\chi^2(f_A||f_B):=
	\E\left[\left(\frac{f_A(B)}{f_B(B)}-1\right)^2\right],
	\end{equation*}
	where $B\sim f_B$.
\end{defn}
Due to equation~\eqref{eq:parindepence}, the relation between the $\chi^2$ divergence for $N$ and one observations is given by
\begin{equation}\label{eq:nchi2}
 \chi^2(f^N_{\tilde x,\tilde \theta}||f^N_{x,\theta})=(1+\chi^2(f_{\tilde x,\tilde \theta}||f_{x,\theta}))^N-1.
\end{equation}

%\noindent To ease notation, hereinafter we use $ \chi^2$ for $ \chi^2_1$. 

\subsection{Chapman-Robbins lower bound for an orbit}\label{sec:ChapmanRobbins}
The classical Chapman-Robbins gives a lower bound on an error metric of the form $\E[\|\widehat X -x\|^2]$, hence we modified it to accommodate to the group invariant metric defined in \eqref{eq:MSE+Rx}. We point out that $\Cov[\phi_x(\widehat X)]$ is related to the $\MSE$ by~\eqref{eq:BiasVar}.
\begin{thm}[Chapman-Robbins for orbits]\label{thm:ChapmanRobbins}
	For any $\tilde x\in\R^L$ and group distribution $\tilde \theta$ in $\Theta$, we have
	\begin{equation*}
	\Cov[\phi_x(\widehat X)]\succeq \frac{z z^T}
	{\chi^2(f^N_{\tilde x,\tilde \theta}||f^N_{x,\theta})},
	\end{equation*}
	where $z=\E_{\tilde x,\tilde \theta}[\phi_{x}(\widehat X)]-\E_{x,\theta}[\phi_{x}(\widehat X)]$. 
\end{thm}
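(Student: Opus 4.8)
The plan is to reduce the asserted matrix inequality to the classical scalar Chapman--Robbins argument, applied to the one-dimensional statistic $T_a:=a^{T}\phi_x(\widehat X)$ for an arbitrary direction $a\in\R^L$. For a scalar $c>0$, a matrix $A$ satisfies $A\succeq c^{-1}zz^{T}$ if and only if $a^{T}Aa\ge c^{-1}(a^{T}z)^2$ for every $a$; moreover $a^{T}\Cov[\phi_x(\widehat X)]\,a=\E_{x,\theta}\!\bigl[(T_a-\E_{x,\theta}[T_a])^2\bigr]$ (the covariance being taken, as in \eqref{eq:Cov}, under the true posterior $f^N_{x,\theta}$) and $a^{T}z=\E_{\tilde x,\tilde\theta}[T_a]-\E_{x,\theta}[T_a]$. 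Hence it suffices to prove
\[
\bigl(\E_{\tilde x,\tilde\theta}[T_a]-\E_{x,\theta}[T_a]\bigr)^2\ \le\ \E_{x,\theta}\!\bigl[(T_a-\E_{x,\theta}[T_a])^2\bigr]\cdot\chi^2\!\left(f^N_{\tilde x,\tilde\theta}\,\big\|\,f^N_{x,\theta}\right).
\]
We may assume $0<\chi^2(f^N_{\tilde x,\tilde\theta}\|f^N_{x,\theta})<\infty$ and $\E_{x,\theta}[T_a^2]<\infty$: if the $\chi^2$ divergence is infinite, or if $T_a\notin L^2(f^N_{x,\theta})$, the right-hand side above is infinite and there is nothing to prove (and in the former case the matrix bound in the theorem is vacuously $\Cov\succeq 0$); if the divergence is zero then, since the Gaussian noise in \eqref{eq:mra} makes $f^N_{x,\theta}$ strictly positive, the two posteriors coincide, forcing $z=0$.

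For the main inequality, introduce the likelihood ratio $\Lambda(y^N):=f^N_{\tilde x,\tilde\theta}(y^N)/f^N_{x,\theta}(y^N)$, which is well defined and finite for every $y^N$ because $f^N_{x,\theta}$ is a product of strictly positive mixture-of-Gaussians densities and hence never vanishes. Note that $\phi_x(\widehat X)$ is a bona fide statistic: although the map $\phi_x$ of \eqref{eq:Rxdef} is defined through the fixed true signal $x$, it is evaluated at the data-dependent quantity $\widehat X$, so $T_a$ is a fixed measurable function of $Y^N$ whose law depends on the generating pair only through $f^N_{\cdot,\cdot}$. Since $\E_{x,\theta}[\Lambda]=1$, a change of measure followed by recentering gives
\[
\E_{\tilde x,\tilde\theta}[T_a]-\E_{x,\theta}[T_a]=\E_{x,\theta}\!\left[T_a(\Lambda-1)\right]=\E_{x,\theta}\!\left[\bigl(T_a-\E_{x,\theta}[T_a]\bigr)(\Lambda-1)\right],
\]
the second equality using $\E_{x,\theta}[\Lambda-1]=0$. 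Applying the Cauchy--Schwarz inequality to the right-hand side and recognizing $\E_{x,\theta}[(\Lambda-1)^2]=\chi^2(f^N_{\tilde x,\tilde\theta}\|f^N_{x,\theta})$ (take $B\sim f^N_{x,\theta}$ in the definition of the $\chi^2$ divergence) yields the displayed bound; letting $a$ range over $\R^L$ gives the matrix inequality, which via \eqref{eq:BiasVar} is the promised lower bound on the MSE.

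All steps are elementary, and I do not anticipate a genuine obstacle. The only places that warrant a line of care are the change of measure $\E_{\tilde x,\tilde\theta}[g(Y^N)]=\E_{x,\theta}[g(Y^N)\Lambda]$, which follows from $f^N_{\tilde x,\tilde\theta}=\Lambda f^N_{x,\theta}$ once integrability is checked (guaranteed by $T_a\in L^2(f^N_{x,\theta})$ and $\Lambda-1\in L^2(f^N_{x,\theta})$ together with Cauchy--Schwarz), and the harmless recentering step, which needs only $\E_{x,\theta}[\Lambda]=1$. In substance, the modification relative to the textbook vector Chapman--Robbins bound is merely the observation that $\phi_x(\widehat X)$ is itself an $\R^L$-valued estimator, to which the standard inequality applies verbatim; the \emph{orbit} structure of the problem has already been absorbed into the decomposition \eqref{eq:BiasVar}, which is what makes this version of the bound usable for the MSE.
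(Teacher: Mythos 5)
Your argument is correct and is essentially identical to the paper's proof: both define the likelihood ratio $V=f^N_{\tilde x,\tilde\theta}(Y^N)/f^N_{x,\theta}(Y^N)$, use the change of measure and the recentering identity $\E_{x,\theta}[V-1]=0$, and apply Cauchy--Schwarz to the projection $w^T\phi_x(\widehat X)$ before passing to the matrix inequality. The extra care you take with integrability and positivity of the Gaussian-mixture density is a harmless refinement, not a different route.
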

\begin{proof}
The proof mimics the one of the classical Chapman and Robbins bound, and is also presented in \cite[Appendix A]{abbe2017multireference}. Define
$$V:=\frac{f^N_{\tilde x,\tilde \theta}(Y^N)}{f^N_{x,\theta}(Y^N)}.$$
and note that
\begin{itemize}
	\item $\E_{x,\theta}[g(Y^N)V]=\E_{\tilde x,\tilde \theta}[g(Y^N)],$ 
	\item $\E_{x,\theta}[V-1]=0,$
	\item $\E_{x,\theta}[(V-1)^2]=\chi^2(f^N_{\tilde x,\tilde \theta}||f^N_{x,\theta}).$
\end{itemize}
We have
\begin{align*}
w^T\left(\E_{\tilde x,\tilde \theta}[\phi_{x}(\widehat X)]-\E_{x,\theta}[\phi_{x}(\widehat X)]\right)\hspace{-98pt}\\
&=\E_{x,\theta}\left[w^T\left(\phi_x(\widehat X)-\E_{x,\theta}[\phi_{x}(\widehat X)]\right)(V-1)\right],
\end{align*}
and by Cauchy-Schwarz
\begin{multline*}
\left[w^T\left(\E_{\tilde x,\tilde \theta}[\phi_{x}(\widehat X)]-\E_{x,\theta}[\phi_{x}(\widehat X)]\right)\right]^2\\
\le \E_{x,\theta}[(w^T(\phi_x(\widehat X)-\E_{x,\theta}[\phi_{x}(\widehat X)]))^2]\chi^2(f^N_{\tilde x,\tilde \theta}||f^N_{x,\theta}).
\end{multline*}
\end{proof}

\subsection{$\chi^2$ divergence and moment tensors}
\label{sec:Fisherautocorr}
In this subsection we give a characterization of the $\chi^2$ divergence, which appears in the Chapman-Robbins bound, in terms of the moment tensors. 

Instead of considering the posterior probability density of $Y^N$, we will consider its normalized version $\widetilde Y^N=Y^N/\sigma$. We then have
\begin{equation}\label{eq:tildey}
\widetilde Y_j=\gamma P{G_j} x + \error_j,
\end{equation}
where $\gamma=1/\sigma$, $G_j\sim \theta$ and $\error_j\sim \N(0,I)$. While this change of variable does not change the $\chi^2$ divergence, we can now take the Taylor expansion of the probability density around $\gamma=0$, that is,
\begin{equation}\label{eq:probdenstaylor}
f_{x,\theta}(y;\gamma)=f_\error(y)\sum_{j=0}^\infty \alpha^j_{x,\theta}(y)\frac{\gamma^j}{j!},
\end{equation}
\smallskip where $f_\error(y)=f_{x,\theta}(y;0)$ is the probability density of $\error_j$ (since when $\gamma=0$,  $\widetilde Y_j=\error_j$) and
\begin{equation}\label{eq:alpha}
\alpha^j_{x,\theta}(y):=\frac1{f_\error(y)}\frac{\partial^j f_{x,\theta}}{\partial \gamma^j}(y;0),
\end{equation}
thus $\alpha^0_{x,\theta}(y)=1$. We note $f_{x,\theta}(y;\gamma)$ is in infinitely differentiable for all $y\in \R^L$, thus $\alpha^j_{x,\theta}(y)$ is always well-defined. We now use ~\eqref{eq:probdenstaylor} to give an expression of the $\chi^2$ divergence in terms of the moment tensors.

\begin{lem}\label{lem:falldominoes} 
	The divergence $\chi^2(f_{\tilde x,\tilde \theta}||f_{x,\theta})$ is expressed in terms of the moment tensors as:
	\begin{align}
	\nonumber \chi^2(f_{\tilde x,\tilde \theta}||f_{x,\theta})\hspace{-30pt}&\\
	&=\frac{\sigma^{-2d}}{(d!)^2}\E\left[\left(\alpha_{\tilde x,\tilde \theta}^d(\error)-\alpha_{x,\theta}^d(\error)\right)^2\right]+\OO(\sigma^{-2d-1}), \label{eq:falldominoes}\\
	&=\frac{\sigma^{-2d}}{d!}\|M^d_{\tilde x,\tilde \theta}-M^d_{x,\theta}\|^2+\OO(\sigma^{-2d-1}), \label{eq:chi2autocorr}
	\end{align}
	where $d=\inf\left\{n:\|M^n_{\tilde x,\tilde \theta}-M^n_{x,\theta}\|^2>0\right\}$.
\end{lem}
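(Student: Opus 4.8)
The plan is to compute $\chi^2(f_{\tilde x,\tilde\theta}\|f_{x,\theta})$ directly from its definition as $\E[(V-1)^2]$ where $V = f_{\tilde x,\tilde\theta}(\error;\gamma)/f_{x,\theta}(\error;\gamma)$ and $\error\sim f_\error$, by substituting the Taylor expansion \eqref{eq:probdenstaylor} and tracking powers of $\gamma=1/\sigma$. First I would write the ratio of densities as
$$\frac{f_{\tilde x,\tilde\theta}(y;\gamma)}{f_{x,\theta}(y;\gamma)} = \frac{\sum_j \alpha^j_{\tilde x,\tilde\theta}(y)\gamma^j/j!}{\sum_j \alpha^j_{x,\theta}(y)\gamma^j/j!},$$
and expand the reciprocal of the denominator as a geometric series in $\gamma$. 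The key observation is that $\alpha^0_{x,\theta}=\alpha^0_{\tilde x,\tilde\theta}=1$, so $V-1 = \sum_{j\ge 1}\beta^j(y)\gamma^j/j!$ where the leading coefficients satisfy $\beta^j(y) = \alpha^j_{\tilde x,\tilde\theta}(y)-\alpha^j_{x,\theta}(y)$ for small $j$, with corrections only entering once we have multiplied different-order terms, i.e.\ at higher order in $\gamma$. Then $\E[(V-1)^2] = \sum_{j,k\ge 1}\E[\beta^j(\error)\beta^k(\error)]\gamma^{j+k}/(j!k!)$.

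The central step is identifying why the lowest surviving order is $\gamma^{2d}$ with $d=\inf\{n:\|M^n_{\tilde x,\tilde\theta}-M^n_{x,\theta}\|^2>0\}$, and why it has the stated form. For this I would establish the link between $\alpha^j_{x,\theta}$ and the moment tensor $M^j_{x,\theta}$: differentiating the Gaussian convolution $f_{x,\theta}(y;\gamma) = \E_{G}[f_\error(y-\gamma PGx)]$ $j$ times in $\gamma$ at $\gamma=0$ produces, via the Hermite/Rodrigues structure of Gaussian derivatives, an expression of the form $\alpha^j_{x,\theta}(y) = \langle M^j_{x,\theta}, h_j(y)\rangle$ where $h_j(y)$ is the tensor of degree-$j$ Hermite polynomials in the coordinates of $y$, normalized so that $\E[h_j(\error)h_k(\error)^T] = j!\,\delta_{jk}\,I$ (orthogonality of Hermite polynomials under the Gaussian measure). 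Consequently $\E[(\alpha^j_{\tilde x,\tilde\theta}(\error)-\alpha^j_{x,\theta}(\error))(\alpha^k_{\tilde x,\tilde\theta}(\error)-\alpha^k_{x,\theta}(\error))] = j!\,\delta_{jk}\,\langle M^j_{\tilde x,\tilde\theta}-M^j_{x,\theta}, M^k_{\tilde x,\tilde\theta}-M^k_{x,\theta}\rangle$, which vanishes for all $j=k<d$ by definition of $d$, and equals $d!\,\|M^d_{\tilde x,\tilde\theta}-M^d_{x,\theta}\|^2$ when $j=k=d$. I must also check that the cross terms $j\ne k$ with $j+k<2d$ vanish — they do, by Hermite orthogonality even before invoking the definition of $d$ — and that the $\gamma^{2d}$ coefficient is not polluted by the geometric-series corrections, which only contribute at order $\gamma^{2d+1}$ or higher because each correction factor carries an extra power of $\gamma$. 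Collecting, the $\gamma^{2d}$ term is $\frac{\gamma^{2d}}{(d!)^2}\,d!\,\|M^d_{\tilde x,\tilde\theta}-M^d_{x,\theta}\|^2 = \frac{\sigma^{-2d}}{d!}\|M^d_{\tilde x,\tilde\theta}-M^d_{x,\theta}\|^2$, giving both \eqref{eq:falldominoes} and \eqref{eq:chi2autocorr}, with everything else absorbed into $\OO(\sigma^{-2d-1})$.

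The main obstacle I anticipate is making the remainder estimate $\OO(\sigma^{-2d-1})$ rigorous: this requires controlling the tail of the Taylor series \eqref{eq:probdenstaylor} uniformly enough in $y$ that the interchange of summation and the Gaussian expectation is justified, and that the series for $V-1$ and its square converge in $L^2(f_\error)$ with a remainder that is genuinely $O(\gamma^{2d+1})$ rather than merely $o(\gamma^{2d})$. Since $PGx$ ranges over a compact set (as $\Theta$ is compact), $f_{x,\theta}(y;\gamma)$ is an average of shifted Gaussians with uniformly bounded shift, so one gets Gaussian-type tail bounds on $\alpha^j_{x,\theta}(y)$ with constants growing at most like $j!\,C^j$ times a polynomial in $y$; this should give a convergent majorant and the needed analyticity in $\gamma$ near $0$. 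I would also double-check the parity/degree bookkeeping — that no odd-order $\gamma$ term survives spuriously at order $2d$ — which follows again from Hermite orthogonality since $\E[h_j(\error)h_k(\error)]=0$ whenever $j\ne k$, including when $j+k$ is odd.
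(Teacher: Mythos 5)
Your proposal is correct and follows essentially the same route as the paper: Taylor-expand the density ratio in $\gamma=1/\sigma$, use the definition of $d$ to kill all terms below order $\gamma^{2d}$, and identify $\E\left[\alpha^d_{\tilde x,\tilde \theta}(\error)\,\alpha^d_{x,\theta}(\error)\right]=d!\left<M^d_{\tilde x,\tilde \theta},M^d_{x,\theta}\right>$. The only cosmetic difference is that you phrase this last identity via Hermite-polynomial orthogonality, whereas the paper differentiates the Gaussian identity $\E\left[\frac{f_\error(\error-\tilde\gamma \tilde a)f_\error(\error-\gamma a)}{f_\error(\error)^2}\right]=\exp(\gamma\tilde\gamma\langle \tilde a,a\rangle)$ --- which is exactly the Hermite generating function, so the two computations coincide.
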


\begin{proof}
This proof is presented in more detail in \cite[Appendix B]{abbe2017multireference}. Equation~\eqref{eq:falldominoes} is obtained by Taylor expanding the $\chi^2$ divergence around $\gamma=0$, using \eqref{eq:probdenstaylor} and the fact that $\alpha_{\tilde x,\tilde \theta}^n(z)=\alpha_{x,\theta}^n(z)$ almost surely for all $n<d$, which follows from the definition of $d$ and equation \eqref{eq:chi2autocorr}. Now to prove \eqref{eq:chi2autocorr}, it is enough to show that 
\begin{equation}\label{eq:momentsvsalpha}
\E\left[\alpha_{\tilde x,\tilde \theta}^d(\error)\alpha_{x,\theta}^d(\error)\right]=d!\left<M^d_{\tilde x,\tilde \theta},M^d_{x,\theta}\right>,
\end{equation}    
Let $G$ and $\tilde G$ be two independent random variables such that $G\sim \theta$ and $\tilde G\sim \tilde \theta$. On one hand we have
\begin{align}
\left<M^d_{\tilde x,\tilde \theta},M^d_{x,\theta}\right>
&=\E\left[\left<{P\tilde G} \tilde x,PG x\right>^d\right]. \label{eq:GtildeG}
\end{align}
On the other hand, we can write $f_{x,\theta}$ explicitly by
\begin{align}
f_{x,\theta}(y)&=\E_G[f_\error(y- \gamma PG x)], \label{eq:gaussmixture}
\end{align}
where $G\sim \theta$, and using equation (\ref{eq:alpha}) we can write
\begin{align*}
&\hspace{-10pt}\E\left[\alpha_{\tilde x,\tilde \theta}^d(\error)\alpha_{x,\theta}^d(\error)\right]\\
&=\frac{\partial^{2d}}{\partial \tilde \gamma^d\partial \gamma^d}
\E\left[\frac{f_\error(\error- \tilde \gamma {P\tilde G} \tilde x)}{f_\error(\error)}\frac{f_\error(\error- \gamma PG x)}{f_\error(\error)}\right]_{\tilde \gamma,\gamma=0}\\
&=\E\left[\frac{\partial^{2d}}{\partial \tilde \gamma^d\partial \gamma^d}\exp\left(\gamma\tilde \gamma \left<{P\tilde G} \tilde x,PG x\right> \right)\right]_{\tilde \gamma,\gamma=0}\\
&=d!\,\E\left[\left<{P\tilde G} \tilde x,PG x\right>^d\right],
\end{align*}
where $G$ and $\tilde G$ are defined as in \eqref{eq:GtildeG}, and \eqref{eq:momentsvsalpha} finally follows from equation \eqref{eq:GtildeG}.
\end{proof}

\subsection{Final details of the proof of Theorem \ref{thm:corsigma}}\label{sec:FinalDetails}

By Theorem \ref{thm:ChapmanRobbins}, Lemma \ref{lem:falldominoes}, equations \eqref{eq:Cov} and \eqref{eq:nchi2} we obtain
\begin{equation}\label{eq:limitexplanation}
\MSE\ge \frac{\|\phi_{x}(\tilde x)-x\|^2}{\left(1+\sigma^{-2d}K_d+\OO\left(\sigma^{-2d-1}\right)\right)^N-1}.
\end{equation}
Equation \eqref{eq:momentssanov} now follows from
\begin{multline*}
\left(1+\sigma^{-2d}K_d+\OO(\sigma^{-2d-1})\right)^N=\\\exp\left(\lambda^{d}_N K^{d}_{\tilde x,\tilde \theta}\right)+\OO\left(\lambda^{d}_N \sigma^{-1}\right)
\end{multline*}
and taking the supremum over $\tilde x$ and $\tilde \theta$.

Finally we prove that the MLE is consistent, i.e. it converges to the true signal in probability, when $\rho=\infty$. Let

\begin{equation}\label{eq:DN}
\LN(\tilde x,\tilde \theta):=\frac{\sigma^{2\bar d}}{N}\sum_{i=1}^N\log \frac{f_{\tilde x,\tilde \theta}(\tilde Y_i)}{f_{x,\theta}(\tilde Y_i)}.
\end{equation}

The MLE is given by
$$\hat X_{\text{MLE}}=\argmax_{\tilde x}\max_{\tilde \theta}\LN(\tilde x,\tilde \theta).$$
Fix $\tilde x$ and $\tilde \theta$, and for ease of notation let $d=d_{\tilde x, \tilde \theta}$. We can write
$$\LN(\tilde x,\tilde \theta)=\frac{\sigma^{2(\bar d-d)}}{N/\sigma^{2d}}\sum_{i=1}^{N/\sigma^{2d}}\sum_{j=1}^{\sigma^{2d}}\log \frac{f_{\tilde x,\tilde \theta}(\tilde Y_{\sigma^{2d}(i-1)+j})}{f_{x,\theta}(\tilde Y_{\sigma^{2d}(i-1)+j})}$$
We have
\begin{align*}
\E\left[\sum_{j=1}^{\sigma^{2d}}\log \frac{f_{\tilde x,\tilde \theta}(\tilde Y_j)}{f_{x,\theta}(\tilde Y_j)}\right]&=\sigma^{2d}\E\left[\log \frac{f_{\tilde x,\tilde \theta}(\tilde Y)}{f_{x,\theta}(\tilde Y)}\right]\\
&=-\sigma^{2d}D(f_{\tilde x,\tilde \theta}||f_{x,\theta})
%&=-\frac{1}{2\, d!}\|M^d_{\tilde x,\tilde \theta}-M^d_{x,\theta}\|^2+\OO(\sigma^{-1})\\
\end{align*}
where $D$ denotes the KL divergence, defined for two probability densities $f_A$ and $f_B$ as
\begin{equation*}
D(f_A||f_B):=
\E\left[\log \left(\frac{f_A(A)}{f_B(A)}\right)\right],
\end{equation*}
where $A\sim f_A$.
	
Using \eqref{eq:probdenstaylor}, with $\gamma=1/\sigma$, we have $f_{\tilde x,\tilde \theta}\rightarrow f_{x, \theta}$ as $\gamma\rightarrow 0$, which implies by \cite[Section F, Theorem 9]{sason2016f} that
$$\lim_{\gamma\rightarrow 0}\frac{D(f_{\tilde x,\tilde \theta}||f_{x,\theta})}{\chi^2(f_{\tilde x,\tilde \theta}||f_{x,\theta})}=\frac12,$$
and
$$D(f_{\tilde x,\tilde \theta}||f_{x,\theta})=\frac{\sigma^{-2d}}{2\, d!}\|M^d_{\tilde x,\tilde \theta}-M^d_{x,\theta}\|^2+\OO(\sigma^{-2d-1}),$$
thus by the law of large numbers, since $N/\sigma^{2d}$ diverges,
 $$\LN(\tilde x,\tilde \theta)\rightarrow\left\{\begin{array}{ll}
 -\infty&\mbox{if }d_{\tilde x, \tilde\rho}<\bar d\\
 -\frac{1}{2\, \bar d!}\|M^{\bar d}_{\tilde x,\tilde \theta}-M^{\bar d}_{x,\theta}\|^2&\mbox{otherwise }
 \end{array}\right.$$
 
As $N\rightarrow \infty$, the maximum of $\LN(\tilde x,\tilde \theta)$ tends to $0$ in probability, and is achieved when $\phi_x(\tilde x)=x$, thus the MLE must converge in probability to $gx$ for some $g\in \Theta$.

\section*{Acknowledgments}
EA was partly supported by the NSF CAREER Award CCF--1552131, ARO grant W911NF--16--1--0051 and NSF Center for the Science of Information CCF--0939370.  JP and AS were partially supported by Award Number R01GM090200 from the NIGMS, the Simons Foundation Investigator Award and Simons Collaborations on Algorithms and Geometry, the Moore Foundation Data-Driven Discovery Investigator Award, and AFOSR FA9550-17-1-0291.

We would like to thank Afonso Bandeira, Tamir Bendory, Joseph Kileel, William Leeb and Nir Sharon for many insightful discussions.

\bibliographystyle{ieeetr}
\bibliography{refs_gg}

\end{document}